\begin{document}

\frontmatter

\pagestyle{headings}

\mainmatter

\title{Algorithms for Junctions\\ in Directed Acyclic Graphs}

\author{Carlos Eduardo Ferreira\thanks{The first author has been supported by CNPq (proc. no. 302736/2010-7)} \and \'Alvaro Junio Pereira Franco\thanks{The second author has been supported by CAPES (proc. no. 33002010176P0)}}

\institute {Universidade de S\~ao Paulo, Instituto de Matem\'atica e Estat\'istica, \\Rua do Mat\~ao, Cidade Universit\'aria, CEP 05508-090, S\~ao Paulo, Brasil \email{\{cef,alvaro\}@ime.usp.br}}

\maketitle

\begin{abstract}
Given a pair of distinct vertices $u, v$ in a graph $G$, we say that $s$ is a junction of $u, v$ if there are in $G$ internally vertex disjoint directed paths from $s$ to $u$ and from $s$ to $v$. We show how to characterize junctions in directed acyclic graphs. We also consider the two problems in the following and derive efficient algorithms to solve them. Given a directed acyclic graph $G$ and a vertex $s \in G$, how can we find all pairs of vertices of $G$ such that $s$ is a junction of them? And given a directed acyclic graph $G$ and $k$ pairs of vertices of $G$, how can we preprocess $G$ such that all junctions of $k$ given pairs of vertices could be listed quickly? All junctions of $k$ pairs problem arises in an application in Anthropology and we apply our algorithm to find such junctions on kinship networks of some brazilian indian ethnic groups.
\end{abstract}
\section{Introduction}
Given a directed graph $G$, a vertex $s\in G$ is a junction of a pair $u, v \in G$, $u \neq v$, if there exist two internally disjoint directed paths from $s$ to $u$ and from $s$ to $v$. A vertex $s \in G$ is a lowest common ancestor, LCA for short, of a pair of vertices $u, v$, $u\neq v$, if it is a junction and for every junction $s^\prime \neq s$ of the pair $u, v$, there is no directed path in $G$ from $s$ to $s^\prime$. In rooted trees, junctions and LCAs are the same vertices and given a pair $u,v \in G$, $u\neq v$, the LCA is unique. On the other hand, in directed graphs, acyclic or not, we can have several different LCAs or junctions for a certain pair of vertices. The problem of finding out whether $s$ is a junction, or an LCA of a pair $u, v$ can be done efficiently using a maximum-flow algorithm. So, we can determine whether a given vertex is a junction, or an LCA, on directed graphs in polynomial time. There are faster algorithms that find out junctions and LCAs in rooted trees and directed acyclic graphs. 

\noindent{\it Related Works.} A problem that have been studied in rooted trees and directed acyclic graphs (DAGs) is how could we preprocess a given graph such that a query to a representative LCA for any pair of vertices could be done quickly. We have found in the literature fast algorithms that solve this problem. Aho, Hopcroft and Ullman (\cite{ahu}) have shown how to preprocess rooted trees with $n$ vertices in $O(n\alpha(n))$ time, where $\alpha(n)$ is the number of times that we must apply $\log_2$ to $n$ to obtain a number less than or equal to zero. Queries are answered in constant time. Harel and Tarjan (\cite{ht}) have shown how to preprocess in $O(n)$ time and to answer queries on the representative LCA in constant time. Other algorithms to preprocess rooted trees appeared in the works by Berkman and Vishkin (\cite{bv2}), Nyk\"anen and Ukkonen (\cite{nu}) and Wen (\cite{w}). The equivalent problem to DAGs is known as {\sc all-pairs-lca} (Bender, Farach-Colton, Pemmasani, Skiena and Sumazin \cite{bfpss}). Given a DAG with $n$ vertices, the preprocessing phase proposed by Bender, Farach-Colton, Pemmasani, Skiena and Sumazin (\cite{bfpss}) spends $\tilde{O}(n^{2.688})$\footnote{$f(n) = \tilde{O}(g(n))$ if there is a constant $c$ such that $f(n) = O(g(n)\log^c n)$.} and a query to a representative LCA can be answered in constant time. Later, Czumaj, Kowaluk and Lingas (\cite{ckl}) have solved this problem to DAGs with preprocessing time $O(n^{2.575})$ and Eckhardt, M\"uhling and Nowak (\cite{emn}) with expected time $O(n^2\log n)$.

In the problem {\sc all-pairs-all-lcas} we want to know how to preprocess a given DAG with $n$ vertices and $m$ arcs such that a query to all LCAs of a pair of vertices can be done quickly. Baumgart, Eckhardt, Griebsch, Kosub and Nowak (\cite{begkn}) have developed algorithms that solve {\sc all-pairs-all-lcas}. The upper bound of the preprocessing is $O(\min\{n^2m, n^{3.575}\})$. Eckhardt, M\"uhling and Nowak (\cite{emn}) have developed two algorithms for this problem, one with preprocessing expected time $O(n^3\log\log n)$ and the other with preprocessing time $O(n^{3.3399})$.

Yuster (\cite{y}) considers in his work the {\sc all-pairs-junction} problem, i.e., how to preprocess a DAG with $n$ vertices such that a query to a representative junction can be done quickly? Any algorithm that solves {\sc all-pairs-lca} could be used to solve {\sc all-pairs-junction} but Yuster (\cite{y}) have shown that we can preprocess in $\tilde{O}(n^\omega)$ time and answer queries in constant time, where $w < 2.376$ is the exponent of fast boolean matrix multiplication.

In the problem treated by Tholey (\cite{t}) there is given a DAG $G$, $s_1, s_2 \in G$ and $k$ pairs of vertices $(u_1, v_1), \dots, (u_k, v_k)$. For each pair $(u_i, v_i)$, $1 \leq i \leq k$, a tuple $(s_1,t_1, s_2, t_2)$ with $\{t_1, t_2\} = \{u_i, v_i\}$ is printed out in constant time if there are two disjoint directed paths from $s_1$ to $t_1$ and $s_2$ to $t_2$. Before that he constructs a modified version of the data structure proposed by Suurballe and Tarjan (\cite {st}) in $O(n\log^2 n + (m+k)\log_{2+(m+k)/(n+k)} n)$ time. 

\noindent{\it Our Application.} This work has been motivated by an application that arises on the field of Anthropology. It is given a kinship network (a DAG) that models the parent-child relationships from a determined society (in our case indian Brazil ethnic groups). Moreover, the set of weddings among individuals from that society is given. We want to know if the partners of the wedding are relatives in some degree. From the anthropological point of view, the junctions, i.e., common ancestors with disjoint descendant lines, are the objects of interest (dal Poz and Silva \cite{dps}). In this case, we want to find all junctions of many pairs of individuals of the network. Thus, we treated here the problem {\sc $k$-pairs-all-junctions}, that is, given a DAG $G$ and $k$ pairs of vertices of $G$, how can we preprocess $G$ such that a query to all junctions of $k$ given pairs of vertices can be done quickly?

\noindent{\it Main Results.} We develop an algorithm to solve the following problem named \textsc{single-junction-all-pairs}. Given a DAG $G$ with $n$ vertices and $m$ arcs, and a vertex $s \in G$, construct a data structure that allows to find all pairs of vertices for which $s$ is a junction. The time spent by it is $O(m)$. We can use another data structure  named dominator trees (Aho and Ullman, \cite{au}) to solve \textsc{single-junction-all-pairs} problem. We construct a dominator tree $T$ rooted in $s$ in linear time considering the graph induced by the vertices descendants of $s$ in $G$. The pairs $u, v$ that have $s$ as a lowest common ancestor in $T$ are the pairs that have $s$ as a junction in $G$. However, our data structures is simpler than the data structures for dynamic LCA queries in trees (Cole and Hariharan, \cite{ch}) used to construct a dominator tree in DAGs. 

To solve the \textsc{k-pairs-all-junctions} problem, we just solve the problem \textsc{single-junction-all-pairs} for all $s$ in $G$ listing (or storing) $s$ to pair $u,v$, if $s$ is a junction of $u,v$. The time spend by it is $O(n(m+k))$.

The algorithm to $\textsc{k-pairs-all-junctions}$ problem can be used to solve the $\textsc{k-pairs-all-lcas}$ problem. Given a DAG $G$ with $n$ vertices and $m$ arcs, and $k$ pairs of vertices, we first find the transitive closure of $G$, find all junctions of the $k$ given pairs of vertices, and finally for each given pair $u, v$ and for each junction $s_i \in J(u,v)$, we verify if there exists an arc $s_i-s_j$ in the transitive closure of $G$, where $s_j \neq s_i$ and $s_j \in J(u,v)$. If yes, then $s_i$ cannot be an LCA of the pair $u,v$ because $s_i$ has $s_j$ as a common descendant to $u, v$. If no, then $s_i$ is an LCA of the pair $u,v$. Thus, the time spent by this method is $O(n^\omega + n(m + k) + n^2k)$, where $O(n^\omega)$ is the time of transitive closure of $G$. If $k=o(n^{1.3399})$, then our simple approach for \textsc{k-pairs-all-lcas} is faster than the best algorithm (worst case) known for solving \textsc{all-pairs-all-lcas} problem. 


Any algorithm that aims to list all junctions of all pairs of vertices spends $\Omega(n^3)$ time, in the worst case. To see that consider the example in Fig. \ref{fig:lim_inf}, where each vertex in the first line is a junction of every pair of vertices in the second one. Thus, to list all the junctions of all $\Omega(n^2)$ pairs would spend $\Omega(n^3)$ time.
\begin{figure}[htb]
\label{fig:lim_inf}
\begin{center}
\psfrag{O(n)}{\small $\Omega(n)$}
\psfrag{...}{\small $\dots$}
\includegraphics[scale=.45]{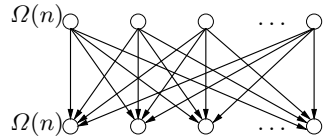}
\end{center}
\caption{Pairs with $\Omega(n)$ junctions.}
\end{figure}

\noindent{\it Organization.} This text is organized as follows. In Section $2$, we introduce some notation used in this work, recall an ordinary search algorithm (depth-first search) and discuss some simple data structures. In Section $3$, we characterize junctions in DAGs. In Section $4$, we describe the algorithm for \textsc{single-junction-all-pairs} problem, and in Section $5$, we make our last considerations. 

\section{Initial Concepts}
Given a DAG $G$ and vertices $s, t \in G$ we say $s$ is a parent of $t$ or $t$ is a child of $s$, when there exists an arc $s-t \in G$. We denote the set of parents of a vertex $u$ in $G$ by $\delta_G^-(u)$ and the set of children of a vertex $u$ in $G$ by $\delta_G^+(u)$. We say $s$ is an ancestor of $t$ or $t$ is a descendant of $s$ when there exists in $G$ a directed path from $s$ to $t$. Usually, we treat a directed path $P=\{s=t_0, t_1, \dots, t_{k-1}=t\}$ as an ordered set of vertices with the property that $t_i - t_{i+1} \in G$, for all $i = 0, \dots, k-2$.

If a vertex $s$ is a junction of the pair of distinct vertices $u, v$, then there are directed paths internally disjoint $P$ from $s$ to $u$ and $Q$ from $s$ to $v$. Note that, if $s$ is an LCA of $u, v$, then $s$ is a junction of $u, v$, but the converse is not always true. As we mentioned before, a pair of vertices $u, v \in G$ can have many LCAs and junctions. We write $J(u,v)$ to denote the set of vertices of $G$ that are junctions of the pair $u, v$. The sets $J(u, v)$ and $J(v, u)$ are equal. If $u$ is a proper ancestor of $v$, then $u \in J(u, v)$. We suppose that $J(u, u) = \emptyset$, for all $u\in G$. Analogously, $LCA(u,v)$ denotes the set of LCAs of the pair $u, v$. If we want to specify the graph $G$ we are working on, then we write $LCA_G(u,v)$. When $G$ is a rooted tree, we abuse of the language writing $s = LCA_G(u,v)$. 

We call an arborescence a rooted directed tree, such that: 1 - there exists a unique vertex with in-degree equal to zero called root; 2 - all vertices except the root have in-degree one; and 3 - there exists a directed path from the root to all vertices in the arborescence. An arborescence with root $s$ is denoted by $T_s$. Any vertex $u \in T_s$ is the root of a subarborescence denoted by $T_u \subseteq T_s$. 

Given a DAG $G$ and a vertex $s \in G$, we use a depth-first search to construct an arborecence $T_s$ composed by root $s$ and all descendants of $s$ in $G$. We denote by $F_s$ the set of all arcs of $G$ with both ends in $T_s$ that are not arcs of $T_s$. For each vertex $u\in T_s$, we maintain the integer post-order value stored in array $post$ indexed by vertices. We also maintain the integer values $minpost[u]$, that is, the minimum post-order values in $T_u$. After such construction, we can make a partition of the arcs with both ends in $T_s$: 1. {\sl Arcs of the arborescence $T_s$}; 2. {\sl External descendant arcs} -- are the arcs in $F_s$ of the form $s-v$, $v \in T_s$. Note that $v$ is a child of $s$ in $G$ but is not a child in $T_s$; 3. {\sl Internal descendant arcs} -- are the arcs $u-v \in F_s$, $u\neq s$, such that there exists a directed path in $T_s$ from $u$ to $v$; 4. {\sl External crossing arcs} -- are the arcs $u-v \in F_s$ such that $u \in T_{s_2}$ and $v \in T_{s_1}$, where $s_1$ and $s_2$ are two different children of $s$ in $T_s$; 5. {\sl Internal crossing arcs} -- are the arcs $u-v \in F_s$ such that $u \in T_{s_1}$, $v \in T_{s_1}$, $s_1$ is a child of $s$ in $T_s$ and there are no directed paths in $T_s$ from $u$ to $v$ neither from $v$ to $u$. 
A similar partition of the arcs has been done by Sedgewick (\cite{s}) and Dasgupta, Papadimitriou and Vazirani (\cite{dpv}). Such construction has some properties. 
\begin{property}
\label{p:1}
For any arc $u-v$ with both ends in $T_s$ we have $post[u] > post[v]$; 
If $s_1$ and $s_2$ are two children of $s$ in $T_s$ and $post[s_1] < post[s_2]$ then,
there is no directed path from $p$ to $q$, where $p \in T_{s_1}$ and $q \in T_{s_2}$; and $post[p] < post[q]$, for all $p \in T_{s_1}$ and for all $q \in T_{s_2}$.
\end{property}
\begin{proposition}
\label{prop:1}
Let $G$ be a DAG, $s$ a vertex in $G$ and $T_s$ an arborescence constructed by depth-first search as before described. Let $s_1$ be a child of $s$ in $T_s$ and $u$ be a vertex in $T_{s_1}$. Let $P=\{s=u_0, u_1, \dots, u_{k-1}=u\}$ be a directed path from $s$ to $u$ in $G$. If the vertex $u_i$, for some $i = 1, \dots, k-1$, is the first vertex of $P$ in $T_{s_1}$, then all remaining vertices of $P$, $u_{i+1}, \dots, u_{k-1}$, belong to $T_{s_1}$. \qed
\end{proposition}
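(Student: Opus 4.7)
The plan is to argue by contradiction. Suppose not every vertex among $u_{i+1}, \ldots, u_{k-1}$ lies in $T_{s_1}$, and let $j$ be the smallest index with $j > i$ and $u_j \notin T_{s_1}$. Then $u_{j-1} \in T_{s_1}$ and there is an arc $u_{j-1} - u_j$ in $G$. Because $T_s$ contains every descendant of $s$ in $G$ and $u_{j-1}$ is such a descendant, $u_j$ must also lie in $T_s$. The aim is to show that wherever $u_j$ sits inside $T_s$, a contradiction arises.

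I would first dispose of the easy case $u_j = s$: since $u_{j-1}$ is a proper descendant of $s$ in $G$, an arc from $u_{j-1}$ back to $s$ would close a directed cycle, contradicting the acyclicity of $G$. Otherwise, using the fact that $T_s$ decomposes as $\{s\}$ together with the subarborescences rooted at the children of $s$ in $T_s$, the vertex $u_j$ belongs to some $T_{s_\ell}$ with $s_\ell$ a child of $s$ in $T_s$, and necessarily $s_\ell \neq s_1$ by the choice of $j$.

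The main case is then handled by invoking the two implications in Property~\ref{p:1}. Compare $post[s_1]$ with $post[s_\ell]$; since $s_1 \neq s_\ell$ the values differ. If $post[s_1] < post[s_\ell]$, Property~\ref{p:1} forbids any directed path from a vertex of $T_{s_1}$ to a vertex of $T_{s_\ell}$, contradicting the single arc $u_{j-1} - u_j$. If instead $post[s_\ell] < post[s_1]$, Property~\ref{p:1} forbids any directed path from $T_{s_\ell}$ to $T_{s_1}$, which is incompatible with the suffix $u_j, u_{j+1}, \ldots, u_{k-1} = u$ of $P$ connecting $u_j \in T_{s_\ell}$ to $u \in T_{s_1}$. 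Both subcases are thus impossible, so no such $j$ exists and every $u_{i+1}, \ldots, u_{k-1}$ lies in $T_{s_1}$.

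The only delicate point I anticipate is making the case analysis on the location of $u_j$ airtight, specifically justifying that a descendant of $s$ distinct from $s$ must belong to exactly one subarborescence $T_{s_\ell}$; once that partition is taken for granted, the rest of the argument is a mechanical application of Property~\ref{p:1}.
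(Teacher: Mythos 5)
Your proof is correct: the paper itself gives no written proof of Proposition~\ref{prop:1} (it is stated with a \qed), and your argument is exactly the intended justification via Property~\ref{p:1}, using the disjoint decomposition of $T_s$ into $\{s\}$ and the subarborescences of the children of $s$, the acyclicity argument for the case $u_j = s$, and the two post-order comparisons to rule out the arc $u_{j-1}\mbox{--}u_j$ leaving $T_{s_1}$ or the suffix of $P$ re-entering it. No gaps; the ``delicate point'' you flag (each proper descendant of $s$ lies in exactly one $T_{s_\ell}$) is immediate from the arborescence structure.
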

\section{Some Results for Junctions in DAGs}
In this section, let us consider a DAG $G$, a vertex $s \in G$ and an arborescence $T_s$ constructed as mentioned before. The arcs of $G$ that do not belong to $T_s$ or $F_s$ are not used in any directed path beginning in $s$. So, they are not important to obtain the pairs $u, v$ that have $s$ as a junction.
The following proposition can be quickly checked.
\begin{proposition}
\label{prop:2}
Let $T_s$ be an arborescence with root in $s$, $s_1$ and $s_2$ different children of $s$ in $T_s$ and two vertices $u, v \in T_s$. If $u \in T_{s_1}$ and $v \in T_{s_2}$, then $s \in J(u,v)$ and $s=LCA_{T_s}(u,v)$. \qed
\end{proposition}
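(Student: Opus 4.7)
The plan is to establish the two assertions separately: first that $s \in J(u,v)$ in $G$, and second that $s = LCA_{T_s}(u,v)$ in the arborescence.

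For the junction claim, I would exhibit two concrete internally disjoint directed paths from $s$ to $u$ and from $s$ to $v$. Since $T_s$ is an arborescence rooted at $s$ and $u \in T_{s_1}$, the unique tree path from $s$ to $u$ in $T_s$ has the form $P = \{s, s_1, \dots, u\}$ with all internal vertices lying in $T_{s_1}$. Analogously, the unique tree path from $s$ to $v$ is $Q = \{s, s_2, \dots, v\}$ with all internal vertices in $T_{s_2}$. Both $P$ and $Q$ are directed paths in $G$ because every arc of $T_s$ is an arc of $G$. Internal disjointness follows from the fact that, in an arborescence, $T_{s_1} \cap T_{s_2} = \emptyset$ whenever $s_1 \ne s_2$ are distinct children of $s$; this in turn comes from the defining property that each non-root vertex has in-degree exactly one in $T_s$, so no vertex can simultaneously be a descendant of two distinct children of $s$. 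Hence $s \in J(u,v)$.

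For the LCA claim within $T_s$, I would argue that the only common ancestor of $u$ and $v$ in $T_s$ is $s$ itself. Indeed, the ancestors of $u$ in $T_s$ form the unique path from the root $s$ down to $u$, and similarly for $v$; these two root-to-leaf-type paths pass through $s_1$ and $s_2$ respectively and then into the disjoint subarborescences $T_{s_1}$ and $T_{s_2}$. Therefore the set of common ancestors is exactly $\{s\}$, which makes $s$ trivially the lowest one.

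Neither step presents a real obstacle; the only subtlety to state cleanly is the disjointness $T_{s_1} \cap T_{s_2} = \emptyset$, which I would justify in one line from the in-degree-one property of arborescences. Because both paths $P$ and $Q$ are tree paths, I do not need to invoke Property~\ref{p:1} or Proposition~\ref{prop:1} here; the argument is self-contained from the definition of $T_s$.
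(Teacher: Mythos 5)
Your proof is correct and is exactly the routine argument the paper has in mind when it states that Proposition~\ref{prop:2} ``can be quickly checked'' and omits the details: the two tree paths through $s_1$ and $s_2$ are internally disjoint because distinct subarborescences $T_{s_1}$ and $T_{s_2}$ are vertex-disjoint, and $s$ is the only common ancestor of $u$ and $v$ in $T_s$. No gaps; nothing further is needed.
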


It remains to us to find out whether $s$ is a junction of pairs $u, v$ when $u$ and $v$ belong to a same subarborescence. Note that if we take two vertices $u$ and $v$ that belong to subarborescence $T_{s_i}$, $s_i$ child of $s$ in $T_s$, and if there is no external arc $p-q$ with $q \in T_{s_i}$, then $s$ cannot be a junction of this pair, since all directed path from $s$ to $u$ and from $s$ to $v$ necessarily share the vertex $s_i$. Therefore, if $s \in J(u, v)$ with $u, v \in T_{s_i}$, then there exists a pair of internally disjoint paths, from $s$ to $u$ and from $s$ to $v$, and one of them has an external arc.

Additionally, we note that for any vertex $s_1$ child of $s$ in $T_s$ and a pair $u, v$ in $T_{s_1}$, if $s \in J(u,v)$, then there exists a pair of internally disjoint paths in $G$ from $s$ to $u$ and from $s$ to $v$, such that all internal vertices from one of them are in $T_{s_1}$. We can prove this by contradiction. Consider two internally disjoint paths in $G$, $P = \{s=u_0, \dots,$ $u_{k-1} = u\}$ and $Q=\{s=v_0, \dots, v_{l-1}=v\}$. Suppose that $P$ and $Q$ do not have all vertices in $T_{s_1}$. Consider the first vertices $u_i$ and $v_j$ of $P$ and $Q$, respectively, that belong to $T_{s_1}$. By Proposition \ref{prop:1}, the directed paths $P_1=\{u_i, u_{i+1}, \dots, u_{k-1}\}$ and $Q_1=\{v_j, v_{j+1}, \dots, v_{l-1}\}$ are internal to $T_{s_1}$. Take the directed path $P_2$ from $s_1$ to $u_i$ in $T_{s_1}$. If $P_2 \cap Q_1 = \emptyset$, then we can construct a directed path from $s$ to $u$ $(\{s\} \cup P_2 \cup P_1)$ with all vertices in $T_{s_1}$, where $(\{s\}\cup P_2 \cup P_1) \cap Q = \{s\}$. So, $P_2 \cap Q_1 \neq \emptyset$. Call $v^\prime$ the vertex from this intersection nearest to $v_{l-1}$. In the same way, take the directed path $Q_2$ from $s_1$ to $v_j$ in $T_{s_1}$. If $Q_2 \cap P_1 = \emptyset$, then we can construct a directed path from $s$ to $v$ $(\{s\}\cup Q_2 \cup Q_1)$ with all vertices in $T_{s_1}$ and $(\{s\}\cup Q_2 \cup Q_1) \cap P = \{s\}$. So, $Q_2 \cap P_1 \neq \emptyset$. Call $u^\prime$ the vertex from this intersection nearest to $u_{k-1}$. Thus, we find a cycle with the directed paths from $v_j$ to $v^\prime$, from $v^\prime$ to $u_i$, from $u_i$ to $u^\prime$ and from $u^\prime$ to $v_j$. A contradiction, since $G$ is a DAG. Therefore, all internal vertices to $P$ or $Q$ are in $T_{s_1}$. This fact helps us to prove the following lemma.

\begin{lemma}
\label{l:1}
Let $s_1$ be a child of $s$ in $T_s$, $u$ and $v$ belong to $T_{s_1}$ and $z = LCA_{T_s}(u, v)$ with $z \neq u, v$. The vertex $s \in J(u, v)$ if, and only if, $z$ belongs to a pair of internally disjoint paths from $s$ to $u$ or from $s$ to $v$.
\end{lemma}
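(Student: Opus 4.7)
The $(\Leftarrow)$ direction is immediate from the definition: any pair of internally disjoint directed paths from $s$ to $u$ and from $s$ to $v$ already certifies $s \in J(u,v)$, regardless of whether $z$ lies on one of them.

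For the $(\Rightarrow)$ direction, assume $s \in J(u,v)$ and invoke the observation of the preceding paragraph to fix internally disjoint paths $P$ from $s$ to $u$ and $Q$ from $s$ to $v$ such that all internal vertices of $P$ lie in $T_{s_1}$. If $z \in P \cup Q$ there is nothing to prove, so assume otherwise. Since $z = LCA_{T_s}(u,v) \neq u, v$, the vertex $z$ has two distinct children $c_u, c_v$ in $T_{s_1}$ with $u \in T_{c_u}$ and $v \in T_{c_v}$, and the $T_{s_1}$-tree paths $\rho_u : z \to c_u \to \cdots \to u$ and $\rho_v : z \to c_v \to \cdots \to v$ are the natural witnesses through $z$. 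Applying Property~\ref{p:1} and Proposition~\ref{prop:1} to the children of $z$ inside $T_{s_1}$, once $P$ (respectively $Q$) enters $T_z$ its tail is trapped in $T_{c_u}$ (respectively $T_{c_v}$); in particular, $P$ and $Q$ never meet one another inside $T_z$, and each stays out of the opposite target subtree.

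The goal is then to splice a tree path through $z$ into one of $P, Q$. My plan is to replace the $T_z$-tail of $P$ by $\rho_u$ and prepend a path from $s$ down to $z$ using the spine $s_1 \to \cdots \to z$ of $T_{s_1}$ (symmetrically for $Q$ and $\rho_v$). The main obstacle, which I expect to be the technical heart of the proof, is that $P$ and $Q$ may already occupy some vertices of this spine, so a naive substitution can collide with the other path. I plan to handle this by a dichotomy on where $P$ and $Q$ first enter the spine, using the DFS arc classification of Section~2 together with the internal disjointness of $P$ and $Q$: whenever the candidate replacement of $P$ through $\rho_u$ fails because $Q$ uses a spine vertex the candidate needs, one argues that $P$ itself must avoid the whole lower spine, and then the symmetric replacement of $Q$ through $\rho_v$ is internally disjoint from $P$. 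Either case yields a pair of internally disjoint paths from $s$ to $u$ and from $s$ to $v$ with $z$ on one of them, which closes the $(\Rightarrow)$ direction.
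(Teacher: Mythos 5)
Your overall frame matches the paper's: the backward direction is immediate from the definition, and for the forward direction you correctly reduce, via the paragraph preceding the lemma, to a pair $P,Q$ with all internal vertices of $P$ inside $T_{s_1}$, dispose of the case $z\in P\cup Q$, and then try to splice the tree path through $z$ into one of the two paths. The gap lies in the two structural claims that carry your case analysis. First, the trapping claim that ``each stays out of the opposite target subtree'' is false: Property~\ref{p:1} only forces post-order values to decrease along arcs, so if $T_{c_u}$ is visited by the DFS before $T_{c_v}$, the path $P$ to $u$ may enter $T_{c_v}$ first and leave it through an internal crossing arc into $T_{c_u}$; in particular $P$ can meet the tree path $\rho_v$ from $z$ to $v$ (symmetrically, $Q$ can meet $\rho_u$ when the visiting order is reversed). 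These intersections are genuine and are exactly what the paper's proof must work around (its Case~1, second subcase, and Case~2.2); they can occur even when neither $P$ nor $Q$ touches the spine, so a dichotomy ``on where $P$ and $Q$ first enter the spine'' does not see them. Second, your fallback step is unjustified: if $Q$ occupies a spine vertex, nothing forces $P$ to avoid the whole lower spine --- internal disjointness only excludes the vertices $Q$ actually uses, and $P$ and $Q$ can hit the spine at different vertices. In that configuration both of your candidates (full spine plus $\rho_u$ paired with $Q$, and full spine plus $\rho_v$ paired with $P$) are blocked, so the claimed dichotomy collapses.

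What is missing is the finer rerouting the paper performs. Among the vertices of $(P\cup Q)\cap R$, where $R$ is the spine from $s_1$ to $z$, it takes the one nearest to $z$, say on $Q$, and builds the new path from $s$ as the prefix of $Q$ up to that vertex, followed by the untouched lower segment of the spine into $z$, followed by one of $\rho_u,\rho_v$; which tree path is appended is dictated by whether the other original path meets it, and when it does, that other path must itself be replaced by a hybrid of its own prefix and the tail of the corresponding tree path. Verifying internal disjointness of these hybrids uses the observation (a consequence of Property~\ref{p:1}) that a path passing from $\rho_v$ to $\rho_u$ can never return to $\rho_v$. Your sketch contains neither the prefix-splicing nor this no-return argument, so the forward direction is not established as written.
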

\begin{proof}
Take a pair of internally disjoint paths $P=\{s = u_0, \dots, u_{k-1} = u\}$ and $Q = \{s = v_0, \dots, v_{l-1} = v\}$. Suppose all internal vertices of $P$ are in $T_{s_1}$ and $Q$ has an external arc $v_i-v_{i+1}$ entering in $T_{s_1}$. Consider $R = \{s_1 = z_0, \dots, z_{m-1} = z\}$ to be the directed path in $T_{s_1}$ from $s_1$ to $z$. Consider $P^\prime = \{z = u^\prime_0, \dots, u^\prime_{n-1} = u\}$ the directed path in $T_{s_1}$ from $z$ to $u$ and $Q^\prime = \{z = v^\prime_0, \dots, v^\prime_{r-1} = v\}$ the directed path in $T_{s_1}$ from $z$ to $v$. Here is really important to note that, by the construction of $T_s$, any path that passes through a vertex in $Q^\prime$ $(P^\prime)$, and then passes through a vertex in $P^\prime$ $(Q^\prime)$, cannot come back to $Q^\prime$ $(P^\prime)$.  Let us divide the proof in two cases.\\
{\bf Case 1.} There is no vertex of $Q$ in $R$. If there is no vertex of $Q$ in $P^\prime$, then we can make $Z=\{s\} \cup R \cup P^\prime$ and we have $Z$ and $Q$ internally disjoint. If there is some vertex of $Q$ in $P^\prime$, then consider $v_j = u^\prime_i$ the vertex in $Q \cap P^\prime$ nearest to $z$. So, the directed paths $U = \{s = v_0, \dots, v_j\} \cup \{v_j = u^\prime_i, \dots u^\prime_{n-1} = u\}$ and $Z = \{s\} \cup R \cup Q^\prime$  are internally disjoint. See the two first illustrations in Fig. \ref{fig:4}.\\
{\bf Case 2.} There are vertices of $Q$ in $R$. Let us consider that the vertex nearest to $z$ in $(P \cup Q) \cap R$ belongs to $Q$, i.e., $v_w=z_t$. The case when this vertex belongs to $P$ is symmetric.\\
{\bf Case 2.1.} There is no vertex of $P$ in $Q^\prime$. Then, the following directed paths are internally disjoint: $P$ and $Z = \{s=v_0, \dots, v_w\} \cup \{v_w = z_t, \dots, z_{m-1}=z\} \cup Q^\prime$.\\
{\bf Case 2.2.} There are some vertices of $P$ in $Q^\prime$. Consider $u_p=v^\prime_q$ the vertex in $P \cap Q^\prime$ nearest to $z$. So, the following directed paths are internally disjoint: $Z=\{s=v_0, \dots, v_w\} \cup \{v_w=z_t, \dots, z_{m-1} = z\} \cup P^\prime$ and $U=\{s = u_0, \dots, u_p\} \cup \{u_p = v^\prime_q, \dots, v^\prime_{r-1} = v\}$.

        \begin {figure}[htb]
          \begin {center}

            \psfrag {vj}{\tiny $v_j$}
            
            \psfrag {s}{\tiny $s$}
            \psfrag {s1}{\tiny $s_1$}
            \psfrag {Z}{\tiny $Z$}
            \psfrag {P}{\tiny $P$}
            \psfrag {U}{\tiny $U$}
            \psfrag {Q}{\tiny $Q$}
            \psfrag {u}{\tiny $u$}
            \psfrag {v}{\tiny $v$}
            \psfrag {z}{\tiny $z$}
            \psfrag {uz}{\tiny $u_z$} 
            \psfrag {vz}{\tiny $v_w$}
            \psfrag {vq}{\tiny $v_q$}
            \psfrag {up}{\tiny $u_p$}
            \includegraphics[scale=.45]{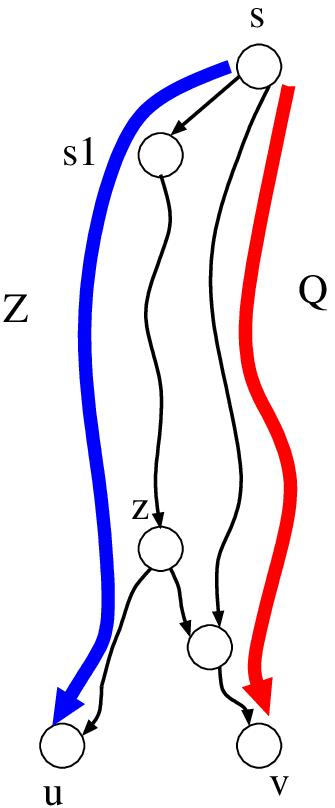}
            \hspace{1cm}
            \includegraphics[scale=.45]{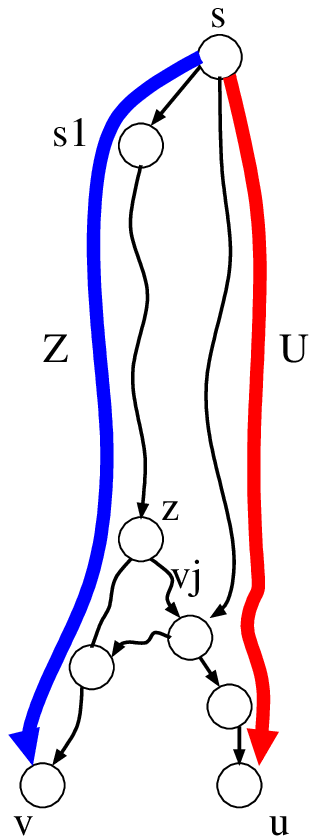}
            \hspace{1cm}
            \includegraphics[scale=.45]{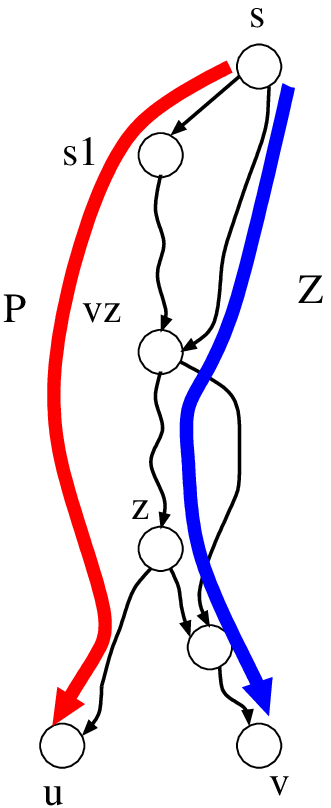}
            \hspace{1cm}
            \includegraphics[scale=.45]{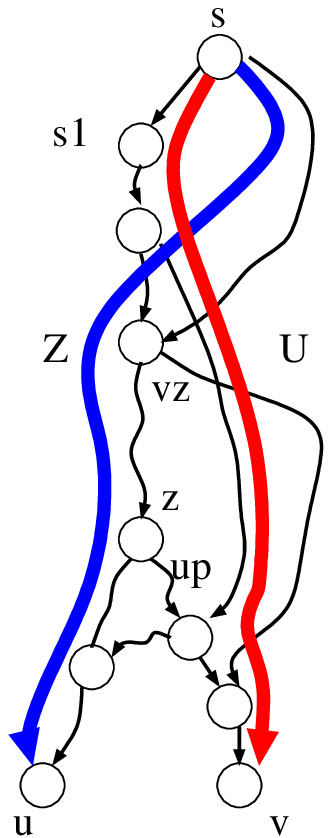}
            \hspace{1cm}
          \end {center}
          \caption {New internally disjoint paths Lemma \ref{l:1} -- Cases 1 and 2. One of them includes vertex $z = LCA_{T_s}(u,v)$. The vertex $v_w \in Q$ is the nearest to $z$ in $(P \cup Q) \cap R$. The unlabeled vertices are not important. They are in the figure just to indicate the input vertices in a path of $T_s$.}
          \label{fig:4}
          \end {figure}
Therefore, if $s \in J(u, v)$, then there exists a pair of internally disjoint paths with $z$ belonging to one of them. This ends the first part of the proof.\\
The converse is easier. By definition, $s$ is a junction of the pair $u, v$ if there is a pair of internally disjoint paths, independently if $z$ belongs to one of them.\qed
\end{proof}
The Lemma \ref{l:2} is a consequence of Lemma \ref{l:1}. It characterizes $s$ as a junction of a pair $u, v$ when we have $z = LCA_{T_s}(u,v)$ and $z \neq u, v$. Lemma \ref{l:3} characterizes $s$ as a junction of a pair $u, v$ in the case where $z=u$ or $z=v$.
\begin{lemma}
\label{l:2}
Let $s_1$ be a child of $s$ in $T_s$, $u$ and $v$ belong to $T_{s_1}$, $z = LCA_{T_s}(u, v)$ and $z \neq u, v$. The vertex $s \in J(u, v)$ if, and only if, $s \in J(z, u)$ or $s \in J(z, v).$\qed
\end{lemma}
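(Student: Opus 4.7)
My plan is to derive both directions from Lemma~\ref{l:1}, which characterizes $s \in J(u,v)$ via the existence of a pair of internally disjoint paths from $s$ to $u$ and from $s$ to $v$ with $z$ lying on one of them.

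For the forward direction, assume $s \in J(u,v)$ and let $P : s \to u$, $Q : s \to v$ be the internally disjoint paths furnished by Lemma~\ref{l:1}, with $z$ on one of them; say $z \in P$ (the case $z \in Q$ is symmetric). Since $z \in T_{s_1}$ we have $z \neq s$, and by hypothesis $z \neq u$, so $z$ is an internal vertex of $P$; internal disjointness then forces $z \notin Q$. Hence the prefix of $P$ from $s$ to $z$ together with the whole of $Q$ are internally disjoint, witnessing $s \in J(z, v)$.

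For the converse, assume without loss of generality $s \in J(z, v)$, with internally disjoint witnesses $A : s \to z$ and $B : s \to v$, and let $C$ be the tree path from $z$ to $u$ in $T_{s_1}$. I aim to construct internally disjoint $s$-to-$u$ and $s$-to-$v$ paths, which by the definition of junction yields $s \in J(u,v)$. The first candidate pair is $P := A \cup C$ and $Q := B$. The walk $P$ is actually a simple path: if some $w \neq z$ lay on both $A$ and $C$, then $G$ would contain directed paths $w \to z$ along $A$ and $z \to w$ along $C$, contradicting acyclicity. Since $A \cap B = \{s\}$ and $z \notin B$ (otherwise $A$ and $B$ would share the non-source vertex $z$), the only possible obstruction to internal disjointness of $P$ and $Q$ is a shared vertex in $C \cap B$.

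If $C \cap B = \emptyset$ we are done. Otherwise I would reroute: letting $w$ denote the vertex of $C \cap B$ closest to $u$ along $C$, I would try $P' := B[s..w] \cup C[w..u]$ paired with $Q' := A \cup D$, where $D$ is the tree path from $z$ to $v$ in $T_{s_1}$. Each of $P', Q'$ is simple by the same acyclicity argument, and the trailing segments $C[w..u]$ and $D \setminus \{z\}$ lie in the disjoint subtrees $T_{z_u}$ and $T_{z_v}$ rooted at the two distinct children of $z$ whose subtrees contain $u$ and $v$ respectively, so those portions do not collide. The main obstacle is to verify internal disjointness of $P'$ and $Q'$ when $B[s..w]$ meets $D$; this is the step that requires real work, and I expect it to be handled by a case analysis in the same spirit as Cases~1 and~2 of the proof of Lemma~\ref{l:1}, iterating the swap if necessary, with termination and simplicity at each step guaranteed by the acyclicity of $G$.
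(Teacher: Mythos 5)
Your forward direction is fine: truncating at $z$ the path that contains it is exactly how the ``only if'' half follows from Lemma~\ref{l:1}. The problem is the converse, where you stop at precisely the point that carries all the content. Having set up the swap $P' = B[s..w]\cup C[w..u]$, $Q' = A\cup D$, you write that the possible collision between $B[s..w]$ and $D$ ``requires real work'' and that you ``expect'' it to be handled by a case analysis, ``iterating the swap if necessary.'' That is not an argument: you give no case analysis, no invariant, and no termination or progress measure for the proposed iteration (it is not even clear what the next swap would be), so the proof as written is incomplete at its only nontrivial step.

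The missing idea is the depth-first-search structure of $T_s$ (Property~\ref{p:1}, i.e.\ the ``cannot come back'' observation that the paper itself highlights inside the proof of Lemma~\ref{l:1}), and with it no iteration is needed -- a single swap already works. Let $z_u$ and $z_v$ be the distinct children of $z$ whose subtrees contain $u$ and $v$, so $C\setminus\{z\}\subseteq T_{z_u}$ and $D\setminus\{z\}\subseteq T_{z_v}$. Since $w\in B\cap C$ and the suffix of $B$ from $w$ to $v$ is a directed path from $T_{z_u}$ to $T_{z_v}$, and arcs strictly decrease post-order values while the post-order values of the two sibling subtrees form disjoint intervals, every vertex of $T_{z_u}$ has larger post-order value than every vertex of $T_{z_v}$; hence no directed path can go from a vertex of $T_{z_v}$ to $w$, so the prefix $B[s..w]$ cannot contain any vertex of $D\setminus\{z\}$ (and $z\notin B$), i.e.\ $B[s..w]\cap D=\emptyset$ automatically. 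Together with the acyclicity and choice-of-$w$ checks you already made, this completes the converse. (For comparison, the paper offers no written proof of this lemma, stating it as a consequence of Lemma~\ref{l:1}; your truncation argument is the intended easy half, and the completed swap argument above is what the other half actually requires.)
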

\begin{lemma}
\label{l:3}
Let $s_1$ be a child of $s$ in $T_s$, $z$ and $u$ belong to $T_{s_1}$ and $z$ be a proper ancestor of $u$ in $T_s$, i.e., $z = LCA_{T_s}(z, u)$, $z\neq u$. The vertex $s \in J(z, u)$ if, and only if, $s \in J(z, t)$, for some $t \in \delta_G^-(u)$.
\end{lemma}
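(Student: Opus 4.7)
My proof strategy for Lemma~\ref{l:3} is to manipulate the last arc on a directed path from $s$ to $u$: in one direction I will append the arc $t \to u$ to a path ending at $t$, and in the other I will peel off the final vertex $u$ from a path ending at $u$ to expose a parent $t$.

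For the ($\Leftarrow$) direction, I start from a pair of internally disjoint paths $P$ from $s$ to $z$ and $Q$ from $s$ to $t$ for some $t \in \delta_G^-(u)$, and form $R$ by appending the arc $t \to u$ to $Q$. Two verifications remain: that $R$ is a simple path, and that $(P,R)$ are internally disjoint. The first reduces to showing $u \notin V(Q)$; otherwise, the portion of $Q$ from $u$ to $t$ together with the arc $t \to u$ would form a directed cycle, contradicting that $G$ is acyclic. The second reduces to showing $u \notin V(P)$; otherwise, the sub-path of $P$ from $u$ to $z$ together with the $T_s$-path from $z$ to $u$ (which exists because $z$ is a proper ancestor of $u$ in $T_s$) would again yield a cycle in $G$.

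For the ($\Rightarrow$) direction, I take internally disjoint paths $P$ from $s$ to $z$ and $R$ from $s$ to $u$ witnessing $s \in J(z,u)$, and let $t$ be the vertex of $R$ immediately preceding $u$, so $t \in \delta_G^-(u)$. Letting $Q$ be $R$ with its final vertex removed, I claim $(P,Q)$ are internally disjoint paths from $s$ to $z$ and from $s$ to $t$; this follows because $V(Q) \subseteq V(R)$, so any vertex shared by $V(P)$ and $V(Q)$ is also shared by $V(P)$ and $V(R)$, and is therefore $s$. The delicate points I expect are a pair of degenerate endpoint cases. If $t = s$, the walk $Q$ collapses to the trivial path $\{s\}$, and the conclusion $s \in J(z,t) = J(z,s)$ reduces to the paper's convention that $s \in J(s,z)$ whenever $s$ is a proper ancestor of $z$, which holds since $z \in T_{s_1}$ with $s_1$ a child of $s$. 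The other worry is $t = z$, which would force us to witness $s \in J(z,z) = \emptyset$; but $t = z$ would place $z$ in $V(R)$ as well as in $V(P)$, contradicting the internal disjointness of $(P,R)$, so this case cannot occur.
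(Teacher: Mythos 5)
Your proposal is correct and follows essentially the same route as the paper's proof: for the forward direction you truncate the last arc of the path to $u$ to expose a parent $t$, and for the converse you append the arc $t\to u$ and rule out $u$ lying on either witnessing path via acyclicity of $G$. Your extra attention to the degenerate endpoint cases ($t=s$ and $t=z$) is a minor refinement the paper leaves implicit, not a different argument.
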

\begin{proof}
To prove the first part we take two internally disjoint paths $P=\{s = u_0, \dots, u_{k-1} = u\}$ and $Q=\{s = z_0, \dots, z_{l-1} = z\}$. We know that $u_{k-2} \in \delta_G^-(u)$ since it is a parent of $u$ in $G$. Therefore, the directed paths $U = \{s = u_0, \dots, u_{k-2}\}$ and $Q$ are disjoint, and then $s \in J(z, u_{k-2})$. \\
To prove the converse, we consider two internally disjoint paths $P=\{s = t_0, \dots, t_{k-1} = t\}$ and $Q=\{s = z_0, \dots, z_{l-1} = z\}$. If $u = t_i$, for any $i = 1, \dots, k-2$, then the directed path $\{u = t_i, t_{i+1}, \dots, t_{k-1}, u\}$ is a cycle. But $G$ is a DAG. Thus, $u\notin P$. If $u = z_j$, for any $j = 1, \dots, l-2$, then $u$ is a proper ancestor of $z$. By assumption, $z$ is a proper ancestor of $u$. Thus, we can again produce a cycle contradicting  the fact that $G$ is a DAG. Then, $u \notin Q$. Therefore, we can use the arc $t-u$ extending the directed path $P$ and constructing two internally disjoint paths $U=P \cup \{u\}$ and $Q$. So, $s \in J(z, u)$. \qed
\end{proof}
The next two lemmas help us to detect, respectively, some pairs of vertices that have and some that do not have $s$ as a junction. 
\begin{lemma}
\label{l:4}
Let $s_1$ be a child of $s$ in $T_s$, $z$ and $w$ belong to $T_{s_1}$ and $z$ be a proper ancestor of $w$ in $T_s$, i.e., $z=LCA_{T_s}(z, w)$, $z \neq w$. Consider the directed path from $z$ to $w$ in $T_s$, $Z=\{z = w^\prime_0, \dots, w^\prime_{k-1}=w\}$. If the vertex $s \notin J(z, w^\prime_i)$, for all $i = 1, \dots, k-2$ and $s \in J(z, w)$, then 
\begin {itemize}
\item [{\bf a.}] the vertex $s \in J(w^\prime_i, w)$, for all $i = 1, \dots, k-2$; and
\item [{\bf b.}] for all pairs of vertices $u, v$, $u \in T_z\setminus T_w$ and $v \in T_w$, we have $s \in J(u, v)$.
\end {itemize}
\end{lemma}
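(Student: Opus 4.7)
For part (a), my strategy is first to show that any pair of internally disjoint paths $P:s\to z$ and $Q:s\to w$ satisfies $V(Q)\cap\{w'_0,\ldots,w'_{k-2}\}=\emptyset$. The endpoint $w'_0=z$ of $P$ cannot appear on $Q$ because internal disjointness forces $V(P)\cap V(Q)=\{s\}$. If some $w'_i$ with $1\leq i\leq k-2$ did lie on $Q$, then the prefix of $Q$ from $s$ to $w'_i$ together with $P$ would exhibit $s\in J(z,w'_i)$, contradicting the hypothesis. Granted this, extending $P$ along the tree from $z$ to $w'_i$ only introduces vertices from $\{z,w'_1,\ldots,w'_{i-1}\}$ into its interior; none of these lies on $Q$, so the extended path and $Q$ remain internally disjoint and witness $s\in J(w'_i,w)$.

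For part (b), I would isolate the following extension sub-claim and use it repeatedly: for every $j\in\{0,\ldots,k-2\}$, if $s\in J(w'_j,w)$ then $s\in J(w'_j,v)$ for all $v\in T_w$. To prove it, take internally disjoint paths $P'':s\to w'_j$ and $Q'':s\to w$ and append to $Q''$ the tree path from $w$ to $v$. The crucial point is that $P''$ cannot pass through any vertex $y\in T_w$: otherwise $y$ would be a descendant of $w$, hence of $w'_j$, in $T_s$, while the suffix of $P''$ from $y$ to $w'_j$ would make $y$ an ancestor of $w'_j$ in $G$; acyclicity of $G$ forces $y=w'_j$, contradicting $w'_j\notin T_w$. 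With the sub-claim in hand, let $z'=LCA_{T_s}(u,v)$. Since $u\in T_z\setminus T_w$, the tree paths $z\to u$ and $z\to w$ share a prefix $z=w'_0,\ldots,w'_{j^*}$ with $j^*\leq k-2$, and the tree path $z\to v$ subsumes the full $z\to w$ prefix, so $z'=w'_{j^*}$. If $u=w'_{j^*}$, the sub-claim (fed by the hypothesis when $j^*=0$ or by part (a) when $j^*\geq 1$) gives $s\in J(u,v)$ at once. Otherwise $z'\neq u$ and $z'\neq v$ (the latter because $w'_{j^*}\notin T_w$ while $v\in T_w$), so Lemma \ref{l:2} reduces $s\in J(u,v)$ to $s\in J(z',u)\vee s\in J(z',v)$, and the sub-claim supplies the second disjunct.

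The main obstacle is the acyclicity argument inside the extension sub-claim, which couples $T_s$-descent and $G$-ancestry along a fixed path; the rest is routine path surgery. A related pitfall is the boundary case $u=w'_{j^*}$, in which $z'$ coincides with $u$ so Lemma \ref{l:2} does not apply, forcing a direct invocation of the sub-claim.
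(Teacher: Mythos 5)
Your proof is correct and takes essentially the same route as the paper's: in (a) you show the $s$-to-$w$ path avoids $Z\setminus\{w\}$ (else $s\in J(z,w^\prime_i)$ would contradict the hypothesis) and extend the $s$-to-$z$ path along $Z$, and in (b) your extension sub-claim is exactly the paper's step of appending the tree path from $w$ to $v$ and ruling out intersections by acyclicity, followed by the tree LCA $w^\prime_{j^*}$ and Lemma \ref{l:2}. Your explicit treatment of the boundary case $j^*=0$ (using the hypothesis $s\in J(z,w)$ rather than part (a)) is a minor tidiness improvement, but the argument is otherwise the same.
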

\begin{proof}
{\bf a.} Consider two internally disjoint paths $P=\{s = w_0, \dots, w_{l-1} = w\}$ and $Q=\{s = z_0, \dots, z_{m-1} = z\}$. We know that $(Q \cap Z) \setminus \{z\} = \emptyset$, since $Z \setminus \{z\}$ contains proper descendants of $z$ and $Q \setminus \{z\}$ contains proper ancestors of $z$. Note also that $(P \cap Z) \setminus \{w\} = \emptyset$. If it is not the case, then $s \in J(z, w^\prime_i)$ for some $i = 1, \dots, k-2$, against our assumption. So, we can extend the directed path $Q$ using the path $Z$ from $z=w^\prime_0$ until $w^\prime_i$, for all $i = 1, \dots, k-2$. Therefore, we can construct new directed paths $R_i=Q \cup \{z = w^\prime_0, \dots, w^\prime_i\}$, for all $i = 1, \dots, k-2$, such that the paths $R_i$ and $P$ are internally disjoint and $s \in J(w^\prime_i, w)$, for all $i = 1, \dots, k-2$. \\
{\bf b.} Take $u \in T_z\setminus T_w$ and $v \in T_w$. Let $w^\prime_j = LCA_{T_s} (u,v)$. By Lemma \ref{l:4} a, we have, in particular, $s\in J(w^\prime_j, w)$. Therefore, there exist directed paths internally disjoint $P$ from $s$ to $w^\prime_j$ and $Q$ from $s$ to $w$. Note that we can extend the directed path $Q$ adding the path in $T_w$ from $w$ to $v$ and this new path, denoted by $R$, does not intersect $P$ or $Q\setminus \{w\}$. If it does, there would be a vertex ancestor and descendant of $w$. Thus, $R \cap P = \{s\}$. Therefore, we have $s \in J(w^\prime_j, v)$. If $u=w^\prime_j$, then is done. If $u \neq w^\prime_j$, then by Lemma \ref{l:2}, $s \in J(u,v)$. \qed
\end{proof}
\begin{lemma}
\label{l:5}
Let $s_1$ be a child of $s$ in $T_s$, $z$ and $w$ belong to $T_{s_1}$ and $z$ be a proper ancestor of $w$ in $T_s$, i.e., $z=LCA_{T_s}(z, w)$, $z \neq w$. Consider the directed path from $z$ to $w$ in $T_s$, $Z=\{z = w^\prime_0, \dots, w^\prime_{k-1}=w\}$. If the vertex $s \notin J(z, w^\prime_i)$ for all $i = 1, \dots, k-1$, then $s \notin J(w^\prime_i, w^\prime_j)$, for all $i,j = 1, \dots, k-1$.
\end{lemma}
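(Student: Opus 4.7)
The plan is a proof by contradiction: assume $s \in J(w'_i, w'_j)$ for some distinct $i,j \in \{1,\dots,k-1\}$, and derive that $s$ must also be a junction of $z$ with $w'_i$ or with $w'_j$, violating the hypothesis. Without loss of generality take $i < j$. Fix internally disjoint paths $P$ from $s$ to $w'_i$ and $Q$ from $s$ to $w'_j$, and let $R=\{s=r_0,r_1,\dots,r_c=z\}$ be the $T_s$-path from $s$ through $s_1$ down to $z$. Note that every vertex of $R$ other than $s$ is an ancestor of $z$ in $T_s$ (or $z$ itself), so no $w'_\ell$ with $\ell \geq 1$ lies on $R$ (those vertices are strict descendants of $z$).

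The first step is to show that $R$ meets each of $P$ and $Q$ in more than just $\{s\}$. Indeed, if $R\cap P=\{s\}$, then $R$ and $P$ are internally vertex disjoint (the endpoint $z$ of $R$ is not in $P$ since $z\neq s$, and the endpoint $w'_i$ of $P$ is not in $R$ by the observation above), so $s\in J(z,w'_i)$, contradicting the hypothesis. The same reasoning rules out $R\cap Q=\{s\}$. Let $p_*=r_e$ and $q_*=r_f$ be the vertices of $R\cap P$ and $R\cap Q$ with largest $R$-index; then $e,f\geq 1$, and because $P\cap Q=\{s\}$ the vertices $p_*,q_*$ must be distinct, forcing $e\neq f$.

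Without loss of generality assume $e<f$; the case $e>f$ is symmetric and yields $s\in J(z,w'_j)$. Splice $\pi_1=Q[s\to q_*]\cup R[q_*\to z]$, which is a simple path from $s$ to $z$ because by the maximality of $f$ no $R$-vertex strictly beyond $q_*$ lies on $Q$. Take $\pi_2=P$. The main technical step is to certify that $\pi_1$ and $\pi_2$ are internally vertex disjoint: the $Q$-prefix portion of $\pi_1$ avoids $P$ by internal disjointness of $P$ and $Q$; the $R$-suffix $r_{f+1},\dots,r_{c-1},z$ avoids $P$ since $p_*=r_e$ was chosen as the last $R$-vertex in $P$ and $e<f$; and $w'_i$ lies neither in $R$ (wrong side of $z$ in $T_s$) nor in $Q$ (since $w'_i$ is the endpoint of $P$ and $P\cap Q=\{s\}$). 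Hence $\pi_1,\pi_2$ witness $s\in J(z,w'_i)$, the desired contradiction. The main subtlety is the boundary case $q_*=z$, in which the $R$-suffix collapses and $\pi_1=Q[s\to z]$; the index inequalities continue to work, since $e<f=c$ still forces $z=r_c\notin P$.
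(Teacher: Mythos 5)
Your proof is correct and follows essentially the same route as the paper's: argue by contradiction, take the tree path $R$ from $s$ to $z$, locate the intersection of $R$ with $P\cup Q$ that is deepest (nearest to $z$), reroute that path along $R$ down to $z$, and contradict $s\notin J(z,w'_i)$ or $s\notin J(z,w'_j)$. The only cosmetic difference is that the paper picks the single vertex of $R\cap(P\cup Q)$ nearest to $z$ directly (which may be $s$ itself), whereas you first rule out $R\cap P=\{s\}$ and $R\cap Q=\{s\}$ and then compare the two last intersection indices; the splice and disjointness checks are the same.
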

\begin {proof}
Proof by contradiction. Suppose that for some $i, j = 1, \dots, k-1$, we have $s \in J(w^\prime_i, w^\prime_j)$. Consider $P = \{s=x_0, \dots, x_{k-1} = w^\prime_i\}$ and $Q = \{s=y_0, \dots, y_{l-1}=w^\prime_j\}$. Consider the directed path on arborescence $T_s$, $R = \{s=z^\prime_0, \dots, z^\prime_{t-1} = z\}$. Define $W$ as the set of vertices of $R \cap (P \cup Q)$. Take the vertex $z^\prime_q \in W$ nearest from $z$. Suppose that $z^\prime_q = y_r \in Q$ (the other case, $z^\prime_q \in P$ is analogous). Then, the directed paths $Z = \{s = y_0, \dots, y_r = z^\prime_q\} \cup \{z^\prime_q, z^\prime_{q+1}, \dots, z^\prime_{t-1} = z\}$ and $P$ are internally disjoint. Therefore, $s \in J(z, w^\prime_i)$, a contradiction. Thus, for all $i, j = 1, \dots, k-1$, we have $s \notin J(w^\prime_i, w^\prime_j)$. \qed
\end{proof}

\section{Algorithms for Junctions in DAGs}
Let us develop an algorithm that solves the problem where it is given a DAG $G$ and a vertex $s \in G$ and we want to know which pairs of vertices $u,v \in G$, $u\neq v$ have $s$ as a junction. We use a data structure for disjoint sets represented by rooted trees and we apply the path compression heuristic. In this data structure, each set is identified by an element called representative (chosen as the root of a tree). See Cormen, Leiserson, Rivest and Stein (\cite{clrs}), for more details. We use an array $p$ indexed by vertices to represent such structure. For each vertex $v$, $p[v]$ points to its respective parent in the data structure for disjoint sets. If $p[v]=v$, then $v$ is a representative. 
The $\proc{Single-Junction-All-Pairs}$ algorithm receives two vertices $s, z \in T_s$ and constructs a data structure for disjoint sets, represented by the global array $p$, such that $s$ is a junction of a pair $u, v \in T_z$ if, and only if, $p[u] \neq p[v]$. 
\begin{codebox}
\Procname{$\proc{Single-Junction-All-Pairs}(s, z)$}
\li \If $|T_z| > 1$
\li \Then $w \gets \proc{vertex}\ (post[z] - 1)$
\li       \While $post[w] \geq minpost[z]$
\li       \Do \If $\exists\ t \in \delta_G^-(w)$ with $p[z] \neq  p[t]$
\li           \Then $p[w] \gets w$ 
\li                 $\proc{Single-Junction-All-Pairs}(s, w)$
\li                 $w \gets  \proc{vertex} (minpost[w] - 1)$
\li            \Else $p[w] \gets z$
\li                 $w \gets \proc{vertex} (post[w] - 1)$
              \End
          \End
     \End
\end{codebox}

The function $\textsc{vertex} (p)$ receives an integer $p$ and returns a vertex $v$ such that $post[v] = p$, if $post[v]$ is defined; or a dummy vertex with post-order value equal to $-1$, otherwise. It spends constant time. 

To construct a data structure for disjoint set such that $s$ is a junction of a pair $u, v \in T_s$ if, and only if, $p[u] \neq p[v]$, we have to make the calls $\proc{Single-Junction-All-Pairs} (s, s_i)$, for all $s_i \in \delta^+_{T_s}(s)$.
We assume that $T_s$ has been constructed as before described and the array $p$ has been properly initialized as follows. For all arcs $u-v \in T_s$, $p[v] = u$, $p[s] = s$ and $p[s_i] = s_i$, for all $s_i \in \delta^+_{T_s}(s)$. Initially we have $|\delta^+_{T_s}(s)| + 1$ disjoint sets and for now $u, v$ with $u \in T_{s_i}$ and $v \in T_{s_j}$, $s_i$ and $s_j$ different children of $s$ in $T_s$ are pairs of vertices for which $s$ is a junction, as we have said in Proposition \ref{prop:2}. Let us see that the algorithm $\proc{Single-Junction-All-Pairs}$ is correct. 
\begin {lemma}
\label{l:7}
Suppose a call to ${\proc{Single-Junction-All-Pairs} (s, s_i)}$, where $s_i \in \delta_{T_s}^+(s)$. In line 3, the following assertions are true for the parameter vertex $z$ and for all pairs of vertices $u, v \in T_{s_i}$, such that $post[u], post[v] \geq post[w]+1$. 
\begin {itemize}
\item The vertex $z$ is a representative;
\item The values $p[u]$ and $p[v]$ point to a representative;
\item For all vertices $x \in T_{s_i} \setminus T_z$, $p[z] \neq p[x]$; and
\item  The vertex $s \in J(u,v)$ if, and only if, $p[u] \neq p[v]$. \hspace{2.9cm} $\qed$
\end{itemize}
\end{lemma}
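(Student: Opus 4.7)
The plan is to establish all four assertions jointly as a loop invariant, by induction on the sequence of iterations and recursive calls ordered in time. The assertions are tightly coupled: each iteration uses the current state of $p[\cdot]$ (captured by the first three assertions) to decide whether $s \in J(z,w)$ via Lemma~\ref{l:3}, and then updates $p[\cdot]$ so that the fourth assertion continues to hold after $w$ is added to the processed region.

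For the base case, at the first entry to line~3 of the outermost call $\proc{Single-Junction-All-Pairs}(s, s_i)$, we have $z = s_i$ and $w = \proc{vertex}(post[s_i] - 1)$. Assertion~1 holds by the initialization $p[s_i] = s_i$, while assertions~3 and~4 are vacuous because $T_{s_i} \setminus T_z = \emptyset$ and no distinct pair in $T_{s_i}$ has both post-order values at least $post[s_i]$; assertion~2 is vacuous as well.

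For the inductive step, assume the invariants hold at some iteration with current~$w$. By Property~\ref{p:1} and the loop guard, every $t \in \delta_G^-(w)$ either has post-order strictly greater than $post[w]$ within $T_{s_i}$ and was therefore processed in an earlier iteration or nested call, or else lies outside $T_{s_i}$, where its representative was fixed at initialization. In Case~A, the existence of $t$ with $p[z] \ne p[t]$ translates, via assertion~4 (or, when $t \notin T_{s_i}$, via the cross-subtree analysis of Proposition~\ref{prop:2} together with the preserved initial representatives), to $s \in J(z,t)$; Lemma~\ref{l:3} then gives $s \in J(z,w)$. After $p[w] \gets w$ is set and the recursive call $\proc{Single-Junction-All-Pairs}(s,w)$ returns, the inductive hypothesis delivers correct representatives inside $T_w$, and Lemma~\ref{l:4} extends the characterization to every pair combining a vertex of $T_w$ with a Case-B ancestor of $w$ already absorbed into $z$'s set. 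Advancing $w$ past $T_w$ then leaves the invariant ready for the next iteration. In Case~B, the contrapositive of Lemma~\ref{l:3} forces $s \notin J(z,w)$; the assignment $p[w] \gets z$ places $w$ into $z$'s set, and Lemma~\ref{l:5} applied to the chain of Case-B vertices already in that set guarantees $s \notin J(w,x)$ for every such $x$, preserving assertion~4.

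The main difficulty will be the interaction between the current call and the rest of the disjoint-set state: one must argue that for $t \in \delta_G^-(w)$ lying outside $T_z$ (whether in $T_{s_i} \setminus T_z$, in some $T_{s_j}$ with $j \ne i$, or equal to $s$ itself), the condition $p[z] \ne p[t]$ still faithfully records $s \in J(z,t)$. I would address this by slightly strengthening the induction hypothesis to the global disjoint-set partition, relying on the observation that vertices outside $T_{s_i}$ keep their initial representatives throughout the present call, and on the fact that path compression only shortens access to representatives without altering set membership.
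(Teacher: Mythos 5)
Your skeleton---induction over the iterations of line 3, Lemma~\ref{l:3} to convert the parent test into a statement about $J(z,w)$, Lemma~\ref{l:4} in the recursive case and Lemma~\ref{l:5} in the absorbing case---is the same as the paper's, but the proposal omits the mechanism that actually re-establishes the fourth assertion after each assignment, namely the case analysis on the location of the partner vertex $x$ driven by $y=LCA_{T_s}(x,w)$ together with Lemma~\ref{l:2}; you never invoke Lemma~\ref{l:2} at all. When $w$ is processed, the biconditional must be verified for \emph{every} $x$ with $post[x]\geq post[w]+1$. In your Case B ($p[w]\gets z$) you argue only, via Lemma~\ref{l:5}, that $s\notin J(x,w)$ for $x$ already in $z$'s set. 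But the assignment also creates pairs with $p[x]\neq p[w]$: vertices $x\in T_z$ that earlier iterations split into other sets (the paper's Case 1.2) and vertices $x\in T_{s_i}\setminus T_z$; for these the invariant requires proving $s\in J(x,w)$, which in the paper needs the third assertion, the extension of a disjoint path along the tree path from $z$ to $w$, and Lemma~\ref{l:2}. Even for $x$ in $z$'s set but off the current chain $Z$, Lemma~\ref{l:5} alone is not enough: the paper combines Lemma~\ref{l:5} applied to the chain from $z$ to $x$ with Lemma~\ref{l:2} at $y=LCA_{T_s}(x,w)$ to conclude $s\notin J(x,w)$. Symmetrically, in your Case A you only cover pairs combining $T_w$ with ``Case-B ancestors absorbed into $z$'s set,'' omitting $x\in T_{s_i}\setminus T_z$ (handled in the paper by the third assertion plus Lemma~\ref{l:2}) and the split-off sets inside $T_z$ (these are covered by Lemma~\ref{l:4}.b, but only if you invoke it for all of $T_z\setminus T_w$, not just for absorbed ancestors). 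Since these are exactly the cases where the freshly assigned $p[w]$ differs from $p[x]$, leaving them out means the invariant is simply not proved.

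Two further points. The difficulty you flag about parents $t\in\delta_G^-(w)$ lying outside $T_z$ or $T_{s_i}$ is legitimate (the paper handles it tersely, via Property~\ref{p:1} placing $z$ and the relevant parents above $post[w]$ so the inductive biconditional applies to the pair $(z,t)$), but announcing that you ``would strengthen the induction hypothesis to the global partition'' is a plan, not an argument; as written it resolves nothing. Finally, the inductive maintenance of the first three assertions is asserted rather than argued; the paper's proof, though short on this point, does give the reasons (lines 5 and 8 always set $p[w]$ to a representative, and every $p[v]$ points to an ancestor of $v$ in $T_s$, which with $p[z]=z$ yields the third assertion). As it stands the proposal is an outline of the right strategy with the central verification missing.
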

Note that each arc in $G$ is explored once by $\proc{Single-Junction-All-Pairs}$ algorithm. So, the next result follows.   
\begin{theorem}
\label{t:1}
Given a DAG $G$ with $n$ vertices and $m$ arcs, and a vertex $s \in G$, we can construct a data structure for disjoint sets in time $O(m)$, and answer in constant time for any pair of vertices $u, v\in G$, if $s$ is or is not a junction of $u,v$. Moreover, after of such construction, all $k$ pairs of vertices that have $s$ as a junction can be printed out in $O(k)$ time. \qed
\end{theorem}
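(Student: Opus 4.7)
The plan is to build the data structure in three phases and then verify each of the three running-time claims in turn. First, I would perform a depth-first search from $s$ to construct the arborescence $T_s$, compute the arrays $post$ and $minpost$, record for each $u \in T_s$ the index $\mathrm{root}[u]$ of the child of $s$ whose subarborescence contains $u$, and initialize $p$ as described just before Lemma~\ref{l:7}. I would then invoke $\proc{Single-Junction-All-Pairs}(s, s_i)$ for every child $s_i$ of $s$. Correctness of the resulting $p$-array for pairs inside a single subarborescence $T_{s_i}$ is exactly the last bullet of Lemma~\ref{l:7}, and for pairs lying in different subarborescences $T_{s_i}, T_{s_j}$ it is Proposition~\ref{prop:2}.

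For the $O(m)$ preprocessing bound, the DFS and the initialization cost $O(n+m)$, so the heart of the argument is to bound the total work of all (nested) calls to $\proc{Single-Junction-All-Pairs}$. The key structural claim is that every $w \in T_s \setminus \{s\}$ is the current vertex of the while loop in line~3 of exactly one call: in the ``then'' branch, after recursing on $w$, the outer call jumps to $\proc{vertex}(minpost[w]-1)$, which lies strictly past the last vertex of $T_w$, so no vertex of $T_w$ is ever revisited at the outer level; in the ``else'' branch the loop simply advances by one post-order position. Whenever $w$ is the current vertex, the dominant work is the parent scan in line~4, costing $O(|\delta_G^-(w)|)$. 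Summing over all $w$ gives $O\bigl(\sum_w |\delta_G^-(w)|\bigr) = O(m)$.

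For $O(1)$ queries I would first dispose of the degenerate cases (if $u=v$, or if either of $u,v$ lies outside $T_s$, then $s \notin J(u,v)$; if $u = s$ and $v \in T_s \setminus \{s\}$, then $s \in J(u,v)$ by the convention that a proper ancestor is a junction). Otherwise I consult $\mathrm{root}[u]$ and $\mathrm{root}[v]$: if they differ, Proposition~\ref{prop:2} answers ``yes''; if they agree, Lemma~\ref{l:7} guarantees both $p[u]$ and $p[v]$ already point directly to their representatives, so a single comparison $p[u] \ne p[v]$ decides the query in constant time. For the listing bound, I would bucket the vertices by the ordered pair $(\mathrm{root}[u], p[u])$ in a single $O(n)$ scan after preprocessing, and then emit every pair whose buckets differ either in their first or in their second coordinate; each of the $k$ output pairs is generated exactly once and in constant time, so the total listing cost is $O(k)$.

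The main obstacle is the ``visited once'' claim underlying the $O(m)$ bound, because a priori the work could be duplicated between an outer call and a nested recursive one. The skip to $\proc{vertex}(minpost[w]-1)$ immediately after the recursive call is precisely the mechanism that prevents this, and it is the step where I would concentrate the bookkeeping of the proof.
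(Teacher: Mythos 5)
Your proposal is correct and follows essentially the same route as the paper: correctness comes from Proposition~\ref{prop:2} together with the last assertion of Lemma~\ref{l:7}, and the $O(m)$ bound is exactly the paper's observation that each arc is explored once, which you justify by showing every vertex serves as the loop variable of line~3 exactly once (the skip to $\proc{vertex}(minpost[w]-1)$ after a recursive call). Your additional $\mathrm{root}[\cdot]$ array is harmless but redundant, since each representative lies inside the subarborescence of its vertex, so the single comparison $p[u]\neq p[v]$ already covers the case of different children of $s$; the bucketing argument for the $O(k)$ listing is a fine concrete implementation of the paper's claim.
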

Theorem \ref{t:1} solves the \textsc{single-junction-all-pairs} problem and helps us to solve other problem associated.
\begin{corollary}
\label{c:1}
Given a DAG $G$ with $n$ vertices and $m$ arcs, a vertex $s \in G$, and $k$ pairs of vertices of $G$, $(u_1, v_1), \dots, (u_k, v_k)$, we can answer in $O(m+k)$ time which pairs $(u_i, v_i)$, $1 \leq i \leq k$, have $s$ as a junction. \qed
\end{corollary}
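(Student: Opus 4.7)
The plan is to invoke Theorem~\ref{t:1} as a black box and then process the $k$ queries one by one. First I would run the preprocessing phase guaranteed by Theorem~\ref{t:1}: build the arborescence $T_s$ from $s$ by depth-first search, initialize the array $p$ as described before Lemma~\ref{l:7} (setting $p[v] = u$ for each arc $u\text{-}v \in T_s$, and $p[s] = s$, $p[s_i] = s_i$ for every $s_i \in \delta^+_{T_s}(s)$), and then invoke \proc{Single-Junction-All-Pairs}$(s, s_i)$ for every child $s_i$ of $s$ in $T_s$. By Theorem~\ref{t:1}, this entire phase runs in $O(m)$ time and produces the disjoint-set structure with the key property that $s \in J(u,v)$ iff $p[u] \neq p[v]$, for arbitrary $u, v \in G$.

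Next I would handle the $k$ queries in a single pass. For each input pair $(u_i, v_i)$, $1 \le i \le k$, I perform a constant-time lookup comparing the representatives of $u_i$ and $v_i$ in the structure built above. By Theorem~\ref{t:1} each such query takes $O(1)$ time, so the $k$ queries together cost $O(k)$. Summing preprocessing and queries yields $O(m) + O(k) = O(m+k)$, which is the desired bound. Pairs for which $u_i = v_i$, or where one of $u_i, v_i$ is not a descendant of $s$, are handled by the convention $J(u,u) = \emptyset$ and by the fact that vertices outside $T_s$ are never assigned to the same set as any descendant of $s$; these degenerate cases are still answered in constant time per pair.

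There is no real obstacle here: the statement is essentially a bookkeeping consequence of Theorem~\ref{t:1}, the only subtlety being that the $k$ queries must genuinely run in $O(1)$ each rather than in amortized time that degrades once interleaved with further updates. Since after the preprocessing phase the disjoint-set structure is static (we only perform lookups, never further unions), each query boils down to comparing two entries of the representative array, which is worst-case $O(1)$ once the path-compression has completed during construction. Hence the total running time is $O(m+k)$, as claimed.
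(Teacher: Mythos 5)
Your proposal is correct and matches the paper's reasoning: the corollary is stated as an immediate consequence of Theorem~\ref{t:1}, obtained exactly as you describe by running the $O(m)$ preprocessing once and then answering each of the $k$ pair queries in constant time via the representative array, for $O(m+k)$ total. Your extra remarks on degenerate pairs and the static nature of the structure after construction are fine but not needed beyond what Theorem~\ref{t:1} already guarantees.
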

Applying Corollary \ref{c:1} for each $s \in G$,  we solve the {\sc $k$-pairs-all-junctions} in $O(n(m + k))$ time. We spend, respectively, $O(m)$ or $O(nk + m)$ space, if we print out during the execution or we store all junctions of the $k$ given pairs. 

\section{Conclusion}
Given a DAG $G$ with $n$ vertices and $m$ arcs, and $k$ pairs of vertices of $G$, we have interest in listing all junctions of the given pairs. This problem, called {\sc $k$-pairs-all-junctions}, arises in some application in Anthropology. It is solved in two parts. First, we construct a data structure for disjoint sets in time $O(m)$ such that a pair of vertices $u,v$ in $G$ has $s$ as a junction if, and only if, $u$ and $v$ are in different sets. So, from $k$ given pairs of vertices, we find out which of them have $s$ as a junction in $O(k)$ time since we can say, in constant time, if $s$ is or is not a junction of any pair of $G$. Second, we apply this idea for all $s$ in $G$. Therefore, to solve the {\sc $k$-pairs-all-junctions} problem we spend $O(n(m+k))$ time and, or $O(m)$ space, if the junctions are listed during execution, or $O(nk + m)$ space, in the worst case, if they are stored. We have applied our algorithm on four kinship networks brazilian indian ethnic groups: arara, deni, enawen\^e-naw\^es and irantxe-myky. The results and consequences in Anthropology will be analyzed in a forthcoming paper.\\

\noindent{\bf Acknowledgement}

We thank an anonymous referee for many interesting sugestions in a former version of this paper.

\newpage
\noindent\textbf{Appendix}\\

\noindent We present here a proof of the Lemma \ref{l:7}.\\

{\noindent\bf Lemma 6.}
{\it
Suppose a call to ${\proc{Single-Junction-All-Pairs} (s, s_i)}$, where $s_i \in \delta_{T_s}^+(s)$. In line 3, the following assertions are true for the parameter vertex $z$ and for all pairs of vertices $u, v \in T_{s_i}$, such that $post[u], post[v] \geq post[w]+1$. 
\begin {itemize}
\item The vertex $z$ is a representative;
\item The values $p[u]$ and $p[v]$ point to a representative;
\item For all vertices $x \in T_{s_i} \setminus T_z$, $p[z] \neq p[x]$; and
\item  The vertex $s \in J(u,v)$ if, and only if, $p[u] \neq p[v]$.
\end{itemize}
}
\begin{proof}
The first assertion is true as in the first call $z=s_i$ is a representative and before any recursive call to $\proc{Single-Junction-All-Pairs}(s, w)$ we make $p[w] = w$. The second assertion is true as line 5 or line 8 is executed on all $w$ proper descendants of $s_i$ and, in both cases, $p[w]$ points to a representative. The third assertion is true as for any vertex $v \in T_{s}$, $p[v]$ points to a representative that is its ancestor in $T_s$. By first assertion, $p[z] = z$. But the vertex $z$ is not an ancestor of $x$. Therefore, $p[z]\neq p[x]$. We prove the fourth assertion by induction in the number of iterations of the line 3. In the first iteration of line 3 we have $z=s_i$ and $w$ is the child of the greatest post-order value, $post[w]=post[s_i] - 1$. This means that $s_i$ is the single vertex in $T_{s_i}$ with post-order value greater than $post[w]$. By definition, $s \notin J(s_i,s_i)$ and $p[s_i] = p[s_i]$. Now suppose that the fourth assertion is true in the $i$-th iteration of line 3 of a call to $\proc{Single-Junction-All-Pairs}(s, z)$. Let us prove that the assertion is true in the $(i+1)$-th iteration of this same call. Consider the directed path in $T_z$ from $z$ to $w$, $Z = \{z=w_0, \dots, w_{k-1}=w\}$ in the $i$-th iteration. This path was configured during previous iterations and, for its construction, we know that $s \notin J(z, w_l)$, $l=0, \dots, k-2$. By Property \ref{p:1}, $post[z], post[t] \geq post[w] + 1$, for $z$ and for all $t \in \delta_G^-(w)$. Fourth assertion is valid for pair $z, t$, so $s \in J(z, t)$ if, and only if, $p[z] \neq p[t]$. We can apply Lemma \ref{l:3} and we will have two possibilities. \\
{\bf Case 1.} The vertex $s \notin J(z, t)$, for all $t \in \delta_{G}^-(w)$. So, $p[z] = p[t]$, for all $t \in \delta_{G}^-(w)$ and by Lemma \ref{l:3}, $s \notin J(z,w)$. The line 8 is executed and we have $p[w] = z$. Take $x \in T_{s_i}$ with $post[x] \geq post[w]+1$ (taking a vertex $x$ with $post[x] = post[w]$ we have, by definition, $s \notin J(w,w)$ and, of course, $p[w] = p[w]$). Consider $y=LCA_{T_s}(w,x)$ (possibly $x = y$). Let us show that exactly before line 9 be executed, $s \in J(x,w)$ if, and only if, $p[x] \neq p[w]$. Suppose $x \in T_{s_i}\setminus T_z$. Then, the vertex $y$ is some vertex of the directed path from $s_i$ to $z$ in $T_{s_i}$ and, by third assertion, $p[y] \neq p[z]$ and $p[x] \neq p[z] = z = p[w]$. Therefore, we have to show that $s \in J(x, w)$. We have $s \in J(y, z)$ since $p[y] \neq p[z]$. As $G$ is a DAG, we can use the directed path from $z$ to $w$ in $T_{s_i}$ to obtain $s \in J(y,w)$. If $y = x$, then is done. Else, by Lemma \ref{l:2}, $s \in J(x, w)$. Now suppose $x \in T_z$. So, $y = LCA_{T_s}(w, x) = w_j$, $w_j \in Z$ (possibly $x = y$).\\
{\bf Case 1.1.} The vertex $s \notin J(x,z)$. In this case, $p[x] = p[z]$. Moreover, $p[w] = z = p[z] = p[x]$. So, we have to show that $s \notin J(x, w)$. As we have said, $s \notin J(z, w_l)$, for $l = 0, \dots, k-1$. By Lemma \ref{l:5}, in particular $s \notin J(y,w)$. Thus, if $x = y$, then $s \notin J(x, w)$. If $x \neq y$, then, by the execution of the algorithm and Lemma \ref{l:5}, $s$ cannot be a junction of any pairs $x, x_r$, where $x_r$ is a vertex in the directed path from $z$ to $x$ in $T_{z}$. One of these vertices is $y$. Therefore, $s \notin J(y, x)$. Finally, by Lemma 2, $s \notin J(x, w)$ because $s \notin J(y, w)$ and $s \notin J(y, x)$.\\
{\bf Case 1.2.} The vertex $s \in J(x, z)$. Then, $p[x] \neq p[z]$. Moreover, $p[w] = z = p[z] \neq p[x]$. So, we have to show that $s \in J(x,w)$. We know that $s \notin J(z, y)$. Then our assertion says $p[z] = p[y]$. So, $p[y]=p[z]\neq p[x]$. It means that $s\in J(y, x)$. Again, by Lemma 2, $s\in J(x,w)$. After the execution of line 9, our assertion is restored.\\
{\bf Case 2.} The vertex $s \in J(z, t)$, for some $t \in \delta_{G}^-(w)$. So, $p[z] \neq p[t]$, for some $t \in \delta_{G}^-(w)$ and by Lemma \ref{l:3}, $s \in J(z,w)$. Line 5 is executed and we have $p[w] = w$. Take $x \in T_{s_i}$ with $post[x] \geq post[w]+1$ (again, taking a vertex $x$ with $post[x] = post[w]$ we have, by definition, $s \notin J(w,w)$ and, of course, $p[w] = p[w]$). We are going to show that exactly before the execution of line 6, $s \in J(x,w)$ if, and only if $p[x] \neq p[w]$. As $w$ is not an ancestor in $T_s$ of $x$ and $p[x]$ points to a representative that is its ancestor in $T_s$, we have $p[x] \neq p[w]$. Thus, we have to show that $s \in J(x, w)$. If $x \in T_z \setminus T_w$, then Lemma \ref{l:4}.b ensures $s \in J(x,w)$. If $x \in T_{s_i}\setminus T_z$, then $y = LCA_{T_s}(x,w)$ is a vertex in the directed path from $s_i$ to $z$ in $T_{s_i}$ (possibly $x = y$). By third assertion, $p[y] \neq p[z]$. Thus, $s \in J(y, z)$, and we can use the directed path from $z$ to $w$ in $T_{s_i}$ to obtain $s \in J(y,w)$. If $x = y$, then is done. If, $x \neq y$, then, by Lemma \ref{l:2}, $s \in J(x,w)$. After the recursive call execution in line 6, we have for all $u, v \in T_{s_i}$, $post[u],post[v] \geq post(\proc{vertex}(minpost[w]))$, the vertex $s \in J(u, v)$ if, and only if, $p[u] \neq p[v]$. Our assertion is restored after line 7. \qed
\end{proof}
\end{document}